\documentclass[journal]{IEEEtran}

\usepackage{setspace}


\usepackage{graphics,
           psfrag,
           epsfig,
           amsthm,
           cite,
           amssymb,
           url,
           dsfont,
           subfigure,
           algorithm,
           algorithmic,
           balance,
           enumerate,
           color,
           setspace
}
\usepackage{amsmath}

\usepackage{epstopdf}

\newtheorem{definition}{Definition}

\newtheorem{lemma}{Lemma}

\newtheorem{theorem}{Theorem}

\newtheorem{conjecture}{Conjecture}

\newtheorem{remark}{Remark}

\newcommand{\sref}[1]{Section~\ref{#1}}
\newcommand{\appref}[1]{Appendix~\ref{#1}}
\newcommand{\fref}[1]{Figure~\ref{#1}}

\newcommand{\cref}[1]{Constraint~\ref{#1}}
\newcommand{\thref}[1]{Theorem~\ref{#1}}

\newcommand{\lref}[1]{Lemma~\ref{#1}}

\newcommand{\algref}[1]{Algorithm~\ref{#1}}

\hyphenation{op-tical net-works semi-conduc-tor}

\newcommand{\ignore}[1]{}


\epsfxsize=3.0in
\pagestyle{plain}
\IEEEoverridecommandlockouts
%


\addtolength{\textfloatsep}{-2mm}
\setlength{\abovedisplayskip}{0.5mm}
\setlength{\belowdisplayskip}{0.5mm}
\setlength{\abovecaptionskip}{0.5mm}
\setlength{\belowcaptionskip}{0.5mm}
\setlength{\floatsep}{1mm}

\begin{document}

\title{\vspace{-.5cm}Robust Node Estimation and Topology Discovery Algorithm in Large-Scale Wireless Sensor Networks}
\author{
\authorblockN{Ahmed Douik$^{\dagger}$, Salah A. Aly$^\ast$, Tareq Y. Al-Naffouri$^{\prime\star}$, and Mohamed-Slim Alouini$^\prime$\\}%
\authorblockA{$^\dagger$California Institute of Technology (Caltech), California, United States of America \\
$^\ast$ Fayoum University (FU), Zewail City of Science and Technology, Egypt\\
$^\star$King Fahd University of Petroleum and Minerals (KFUPM), Kingdom of Saudi Arabia \\
$^\prime$King Abdullah University of Science and Technology (KAUST), Kingdom of Saudi Arabia \\
Email: $^\dagger$ahmed.douik@caltech.edu, $^\ast$salah@zewailcity.edu.eg, \\
$^\prime$\{tareq.alnaffouri,slim.alouini\}@kaust.edu.sa }
\vspace{-.8cm} }

\maketitle

\IEEEoverridecommandlockouts

\begin{abstract}
This paper introduces a novel algorithm for cardinality, i.e., the number of nodes, estimation in large scale anonymous graphs using statistical inference methods. Applications of this work include estimating the number of sensor devices, online social users, active protein cells, etc. In anonymous graphs, each node possesses little or non-existing information on the network topology. In particular, this paper assumes that each node only knows its unique identifier. The aim is to estimate the cardinality of the graph and the neighbours of each node by querying a small portion of them. While the former allows the design of more efficient coding schemes for the network, the second provides a reliable way for routing packets. As a reference for comparison, this work considers the Best Linear Unbiased Estimators (BLUE). For dense graphs and specific running times, the proposed algorithm produces a cardinality estimate proportional to the BLUE. Furthermore, for an arbitrary number of iterations, the estimate converges to the BLUE as the number of queried nodes tends to the total number of nodes in the network. Simulation results confirm the theoretical results by revealing that, for a moderate running time, asking a small group of nodes is sufficient to perform an estimation of $95\%$ of the whole network.
\end{abstract}

\begin{keywords}
Anonymous networks, sensor networks, cardinality estimation, node counting.
\end{keywords}

\section{Introduction} \label{sec:int}

Wireless Sensor Networks (WSNs) have been a great success in the past decades. Generally, a WSN refers to a set of small electronic devices (sensors) capable of monitoring and measuring certain phenomena, e.g., temperature, pressure, flood, fires, etc., usually in hazardous and non-reachable environments. A WSN is typically composed of hundreds to millions of nodes capable of intersecting and communicating with each other. Due to their small size, these devices have limited resources such as memory, computation power, battery lifetime and bandwidth.

This paper is interested in estimating the cardinality, i.e., the size, of a network. In other words, the aim is to determine the number of nodes distributed randomly and uniformly in a given field. There are several benefits for cardinality estimation in graphs such as energy efficiency \cite{Varagnolo2014}, mobile communication and coding schemes design \cite{Cattani2014}, and distributed storage \cite{aly08e,aly09f}. Furthermore, the paper proposes that each node discovers its neighbours which help network designers enhancing coverage and connectivity \cite{Shames2012}.

Applications of the network size estimation are not limited to WSN. With the shift in the design from the centralized architectures to decentralized ones, the problem becomes increasingly in demand due to its applications in social networks and artificial intelligence \cite{Katzir2014,Wang2013,shafaat2008practical}. Even though decentralized systems are more scalable and robust to failure, their use makes the estimation of the parameters of the whole network challenging.

For large-scale networks, where network size can reach up to couple of million nodes, it is computationally expensive to brute-search all nodes to infer information about the entire network. Moreover, it is infeasible that each node communicates with the data collector (DC). In order to determine the size of such massive systems, two trends can be distinguished in the literature, namely the node counting and the size estimation.

The problem of node counting in an undirected graph is introduced in \cite{Szymanski1998}. Generally, the aim is to visit all nodes of a graph while avoiding at maximum revisiting nodes. The problem has numerous applications in artificial intelligence and control theory. However, the authors in \cite{Szymanski1998} prove it to be NP-complete with a time complexity given by $\Omega{(n^{\sqrt{n}})}$ where $n$ is the cardinality of the graph. Therefore, such approach it unsuited for large-scale networks.

The use of statistical inference for cardinality estimation of a given network first appears in the literature with the German tank problem \cite{Gum2005105}, in which the aim is to estimate the total number of tanks given the serial number of the captured ones. The fundamental idea of cardinality estimation is to sample a subset of the entire population and available information. In other words, querying only a small portion of nodes to infer information about the whole network status.

\subsection{Related Work}

Due to the complexity of nodes' counting, numerous research works focus on estimating the cardinality and mean edges degree in a graph. The authors in \cite{aly08e,aly10c} state a method for determining the total number of nodes in a graph using data flooding and random walks. They present an algorithm for node estimation in large-scale wireless sensor networks using random walks that travel through the network based on a predefined probability distribution.

Ribeiro et al. \cite{Ribeiro2012b,Ribeiro2012} propose a scheme for estimating the network parameters in directed graphs using random walks. Their algorithm precisely predicts the out-degree distributions of a variety of real-world graphs. Similarly, Dalal et al. \cite{dalal05} study the problem in a context of robust visual object recognition. The authors in \cite{Katzir2014} present two algorithms for estimating the total number of online users in social networks by using sampling from graphs assumed to have a stationary distribution.

The authors in \cite{Cooper2012,Dasgupta2014} propose a model for estimating the network parameters using random walks in graphs. In particular, by sampling from the graph, they propose a method to determine the average edge degree rather than the individual node degree. The problem of estimating the mean degree of a graph is first suggested by Feige et al. \cite{feige2006}. The authors in \cite{Zhang2014} present a sampling method for node degree estimations in a sampled network and the authors in \cite{Wang2013} show a way to obtain content properties by testing a small set of vertices in the graph.

While the authors in \cite{Varagnolo2014} propose a model for distributed cardinality estimation in anonymous networks using statistical inference methods, the authors in \cite{Cohen1997} present a scheme for estimating the number of reachable neighbours for a given node and a size of the transitive closure. They present an $O(n)$ time complexity algorithm based on Monte Carlo that estimates, with a small error, the sizes of all reachability sets and the transitive closure of a graph.

\subsection{Contribution}

The difficulty of the network size estimation heavily depends on the assumptions and features of the system. This paper considers the anonymous networks framework \cite{shafaat2008practical}, where nodes only know their unique identifier (ID). The authors in \cite{481599} show that with a centralized strategy, the node estimation can be obtained in finite time with probability one. For non-unique IDs, the authors in \cite{cidon1992message,5978198} demonstrate that the estimation cannot be performed with probability one in limited time or with a bounded computational complexity. The problem is, then, to discover estimators that trade-off small error likelihood and moderate computational complexity.

This paper proposes a hybrid scheme that not only performs node counting that can be run for an arbitrary time rounds but further uses the output to carry out the network size estimation. Such estimate benefits network designers to design the coding schemes appropriately. The proposed system combines the advantages of both the node counting algorithms and the node estimation ones. Given that the time and computation complexity of node counting algorithms is high, their use in large-scale networks is prohibitive. On the other hand, network size estimation algorithms have, in general, high variance. Depending on the initialization parameters, the estimate of the proposed scheme balance these two effects and can be made arbitrary as close to the network cardinality as wanted. Furthermore, the algorithm suggests, at the same time, to discover the neighbours of each node. Such knowledge is crucial for data routing that can be combined with the code design resulting in efficient resource utilization. Due to space limitation, this work considers nodes with unique IDs. However, this assumption can be removed in a future work by exploiting the inverse birthday paradox.

The rest of this paper is organized as follows: In \sref{sec:net}, the system model and the problem formulation are presented. \sref{sec:pro} illustrates the proposed cardinality estimation algorithm whose performance analysis are characterized in \sref{sec:per}. Simulation results are shown and discussed in \sref{sec:sim}. Finally, \sref{sec:con} concludes the paper.

\section{Network Model and Problem Formulation} \label{sec:net}

\subsection{Network Model}

Consider a wireless sensor network $\mathcal{N}$ with $n$ sensor nodes that are randomly and uniformly distributed in a region $A=[0,L] \times [0,W]$ for some $L,W > 0$. The network $\mathcal{N}$ can be considered as an abstract graph $\mathcal{G}=(\mathcal{V},\mathcal{E})$ with a set of nodes $\mathcal{V}$ and a set of edges $\mathcal{E}$, where $n=|\mathcal{V}|$. The set $\mathcal{V}= \{s_1,\ \cdots,\ s_{n}\}$ represents the sensors that measure information about a specific field, and $\mathcal{E}$ represents the set of links between the sensors.

Two arbitrary sensors $s_i$ and $s_j$ for $1 \leq i \neq j \leq n$ are connected if they are in the transmission range each other. Assuming that the transmission range is circular, let $R$ be its radius\footnote{The algorithm is independent of the considered transmission range. However, the performance analysis provided in the rest of the paper assumes circular transmission range with the same radius for all nodes}. Therefore, $s_i$ and $s_j$ are connected if and only if $d(s_i,s_j) \leq R$, where $d(.,.)$ is the distance operator.

The paper assumes that neither the number of these $n$ nodes, i.e., the network size, nor their connections, i.e., the network topology, are known. However, a bound on the network size $N_{\max} \geq n$ is known by the data fusion center. This scenario can be seen as a network after a long running time or a disaster. Initially, the network is composed of $N_{\max}$ nodes each having a unique ID. After a long running time or a disaster, some of the nodes may disappear from the graph leaving a graph with $n \leq N_{\max}$ nodes with unique ID. Let $ID_i$ be the ID of sensor $s_i$.

\subsection{Network Protocol}

In the considered network model, each node knows only its unique identifier. Communication between nodes is performed by broadcasting the information to transmit. Note that nodes needs not to transmit additional bits indicating its ID with the information packet. Moreover, no acknowledgement is expected from sensors that successfully receive a packet. Transmissions are subject to erasure at the sensors with a probability $q_{s_i}$ for sensor $s_i$. In other words, for a sensor $s_i$ broadcasting data, sensors $s_j \in S_i$ successfully receives the data with probability $1-q_{s_j}$ where $S_i$ the set of neighbours of a node $s_i$ defined as follows.

\begin{definition}
Denote by $S_i$ the set of neighbours of a node $s_i, 1 \leq i \leq n$. In other words, $S_i = \{s_j \in \mathcal{V}$ such that $d(s_i,s_j) \leq R\}$.
\end{definition}

This paper consider static nodes in the network. Therefore, due to the motion-less of nodes, their relative position in the network remains identical which results in an unchanged set of neighbours for all nodes.

\subsection{Problem Formulation}

Given the aforementioned network model and protocol, this paper's objectives is to:
\begin{enumerate}
\item Estimate the number of nodes $n$ by asking $K$ randomly nodes in the network. Let $\mathcal{K}$ be the set of nodes in the network that can be queried by the data collector. This set of nodes is randomly picked from the set of alive and dead nodes with $|\mathcal{K}|=K \ll |\mathcal{V}|=n \leq N_{\max}$. 
\item Discover locally for an arbitrary node $s_i$ its set of neighbours $S_i$.
\end{enumerate}

Without a loss of generality, the DC is assumed to know the IDs of the nodes in the initial network comprising $N_{\max}$ sensors. The selection of the queried nodes is performed by sampling uniformly without replacement from this set of IDs. Such methods results in $K$ nodes randomly picked from the set of alive and dead ones. Throughout the paper, the notation $\mathcal{U}(0,1)$ refers to the uniform distribution over $(0,1)$.

\section{Proposed Node Estimation Algorithm} \label{sec:pro}

This section introduces the hybrid node counting and estimation algorithm. The algorithm estimates the total number of nodes in a network. The algorithm runs in three distinct phases: the initialization, the knowledge distribution, and the query phases. In the initialization phase, the initial packets of the nodes and their transmit probability are set. In the knowledge distribution phase, the information about the networks is disseminated among the surviving nodes from neighbour to neighbours. Finally, in the query stage, the DC collects the information about the network by asking some nodes and inferring the size of the whole system.

\subsection{Initialization Phase}

\begin{algorithm}[t]
\begin{algorithmic}
\REQUIRE $\mathcal{G}=(\mathcal{V},\mathcal{E})$, with $\mathcal{V}=\{s_1,\ldots,s_{n}\}$ and $f_{s_i}^{\text{initial}},\ \forall \ s_i$.
\STATE Initialize $\mathcal{T}(0) = \varnothing$.
\FORALL {$s_i \in \mathcal{V}$}
\STATE Initialize $P_{s_i}=\{ID_i\}$
\STATE Initialize $f_{s_i}=f_{s_i}^{\text{initial}}$
\ENDFOR
\end{algorithmic}
\caption{Initialization Phase.}
\label{alg:initial}
\end{algorithm}

In the initial step, each node in the network generates a packet containing its ID. As the packet size limitation is crucial, this paper consider reducing it. For a network with initial $N_{\max}$ nodes, the distinct IDs can be encoded using $\lceil \log_2 (N_{\max}) \rceil$, where $\lceil.\rceil$ is the ceiling function. Therefore, the maximum size a packet can reach at any node in the network is $n \lceil \log_2 (N_{\max}) \rceil$ as only $n$ nodes are alive. Such packet size is convenient for practical scenarios as it scales logarithmically with $N_{\max}$ and linearly with $n$.

Each node $s_i$ also initializes its initial transmit probability $f_{s_i}^{\text{initial}}$, where $f_{s_i}^{\text{initial}}$ is the probability that the node broadcasts the packet it already holds to its neighbours. Whereas a small value of the initial probability means that there is small amount of communication between nodes in the network, a value $f_{s_i}^{\text{initial}} \approx 1$ means that all nodes broadcast their packets at each iteration. Let $\mathcal{T}(t)$ be the set of nodes that transmitted a packet at time instant $t$ with $\mathcal{T}(0) = \varnothing$. \algref{alg:initial} summarizes the steps of the initialisation phase.

\begin{remark}
The proposed algorithm can be easily extended to perform topology discovery, i.e., the estimation of both $\mathcal{V}$ and $\mathcal{E}$, by modifying the initial packets of each node. Each node $s_i$ generates a packet containing both its ID and its $(X,Y)$ coordinates. Assuming that coordinates are encoded using $V$ bits, e.g., $V=32$ bits to encode a real number, the maximum size a packet can reach is $2 V n \lceil \log_2 (N_{\max}) \rceil$. Therefore, the size of the topology discovery packet scales in the same manner as the one of the cardinality estimation. Due to space limitations, the performance analysis of the topology discovery scheme is omitted in this paper as it follows similar steps to the ones exposed herein.
\end{remark}

\subsection{Knowledge Distribution Phase}

\begin{algorithm}[t]
\begin{algorithmic}
\REQUIRE $\mathcal{G}=(\mathcal{V},\mathcal{E})$, with $\mathcal{V}=\{s_1,\ldots,s_{n}\}$.
\FOR{$t=1,\ 2,\ \cdots$}
\STATE Set $\mathcal{T}(t) = \varnothing$.
\FORALL {$s_i \in \mathcal{V}$}
\FORALL {$s_j \in \mathcal{T}(t-1)$}
\IF{$P_{s_j}$ heard}
\STATE Set $S_i$ = $S_i \cup s_j$.
\IF{$P_{s_j} \nsubseteq P_{s_i}$}
\STATE Set $P_{s_i} = \left( P_{s_i} \cup P_{s_j} \right)  \setminus ID_i$.
\STATE Set $P_{s_i} = \{P_{s_i},ID_i\}$.
\STATE Set $f_{s_i} = \cfrac{1}{2}(f_{s_i}+1)$.
\ENDIF
\ENDIF
\ENDFOR
\STATE Sample $u_{s_i}$ from $\mathcal{U}(0,1)$.
\IF{$u_{s_i} < f_{s_i}$}
\STATE $s_i$ broadcasts $P_{s_i}$.
\STATE Set $f_{s_i}=f_{s_i}^{\text{initial}}$.
\STATE Set $\mathcal{T}(t) = \mathcal{T}(t) \cup s_i$
\ENDIF
\ENDFOR
\ENDFOR
\end{algorithmic}
\caption{Knowledge dissemination Phase.}
\label{alg:knowledge}
\end{algorithm}

In this phase, the knowledge is distributed among the alive nodes in the network from neighbours to neighbours. At each running time of the algorithm, if a node $s_i$ receives a packet from a node $s_j$ whose ID can be determined by examining the last ID in the received packet, it adds such node to is set of neighbours $S_i$. Depending on the content of the received packet, two scenarios can be distinguished:
\begin{itemize}
\item The packet does not contain a new information for $s_i$, i.e., $P_{s_j} \subseteq P_{s_i}$): The packet is discarded and the buffer is not updated.
\item The packet brings a new information to the node, i.e., $P_{s_j} \nsubseteq P_{s_i}$: The node update its buffer and increases its transmit probability. The more innovative packets a node receives, the more its transmit probability increases. This is motivated by the fact that the more new information a node receives, the better candidate it is to transmit. To be able to estimate locally the neighbours of the nodes, each node first remove its ID from the packet it possesses and then append it to the end of the packet.
\end{itemize}

Afterward, each node $s_i$ samples from a probability distribution $\mathcal{U}(0,1)$ and decides, according to $f_{s_i}$, either to broadcast $P_{s_i}$ or not. After broadcasting data, the node resets its transmit probability to the initial value. This is motivated by the fact that after transmission, if all neighbours received the packet, then node $s_i$ does not bring new information anymore unless it receives new packets. \algref{alg:knowledge} summarizes the steps of the knowledge distribution phase.

\subsection{Query Phase}

\begin{algorithm}[t]
\begin{algorithmic}
\REQUIRE $\mathcal{G}=(\mathcal{V},\mathcal{E})$ and $\mathcal{K}$ with $|\mathcal{K}| = k \ll n$.
\STATE Initialize $\tilde{P} = \varnothing$.
\FORALL {$s_i \in \mathcal{K}$}
\STATE $\tilde{P} = \tilde{P} \cup P_{s_i}$.
\ENDFOR
\STATE Set $\tilde{n} = |\tilde{P}|$
\end{algorithmic}
\caption{Data queries and network size estimation.}
\label{alg:collection}
\end{algorithm}

In this phase, a DC queries some nodes from the set of nodes to retrieve information about the current status of the network $\mathcal{N}$ and infer its size. If the queried node $s_i$ is alive then it transmits its packet $P_{s_i}$. Otherwise, there is no transmission, and the packet of that node is the empty set.

After querying the nodes, their packets are processed using the union operator and by counting the number of IDs. In other words, the quantity $\tilde{Z}$, the counting estimation, can be obtained by $\tilde{Z} = |\tilde{P}|$ where $|.|_1$ is the cardinality operator. \algref{alg:collection} summarizes the steps of the data collection and network size estimation phase. The next section relates the counting estimation to the Best Linear Unbiased Estimators (BLUE) of network size $\tilde{n}$.

\section{Performance Analysis} \label{sec:per}

Let $X_{ij}(t)$ be a Bernoulli random variable denoting if node $s_i$ knows that node $s_j$ is alive. Let $X_i(t) = (X_{i1}(t),\ \cdots, \ X_{iN_{\max}}(t))$ be the vector containing the knowledge of node $s_i$. From \algref{alg:knowledge}, $P_{s_i}$ is the realisation of the random variable $X_i(t)$ at each time slot $t$.

Let $Z(t)=(Z_1(t), ,\ \cdots, \ Z_{N_{\max}}(t))$ be a random variable where $Z_i(t), \ 1 \leq i \leq N_{\max}$ is a Bernoulli random variable denoting if the central unit knows that node $s_i$ is alive when the data collection is performed at time slot $t$. Let $\tilde{Z}(t) = \sum\limits_{i=1}^{N_{\max}} Z_i(t)$. From \algref{alg:collection}, $\tilde{Z}$ is the realisation of $Z(t)$ at query time $t$. Given the data collection equation, the random variable $Z_i(t), \ 1 \leq i \leq N_{\max}$ can be written as follows:
\begin{align}
Z_i(t) = \max_{s_j \in \mathcal{K}} X_{ji}(t).
\end{align}

Define $\mathcal{A}$ as the set of node that are alive and $\mathcal{D}=\mathcal{N}\setminus \mathcal{A}$ the set of nodes that are dead where $\mathcal{N}$ is the set of all nodes in the network. It can be easily seen that $|\mathcal{N}|=N_{\max}$ and $|\mathcal{A}|=n$.
\begin{definition}
Let $\mathcal{B}_t(s_i), t \geq 1$ be the $t$-degree neighbours function defined as:
\begin{align}
\mathcal{B}_t(s_i) = \bigcup_{s_j \in \mathcal{B}_{t-1}(s_i)} S_j,
\end{align}
with $\mathcal{B}_0(s_i)=s_i$. At time $t$, the function $\mathcal{B}_t(s_i)$ represents the neighbours (of the neighbours)$\times (t-1)$ of node $s_i$.
\end{definition}

This section assumes that nodes have the same initial transmit probability $f$ and the same erasure probability $q$. The following lemma links the estimator $\tilde{Z}(t)$ given by \algref{alg:collection} to the BLUE of the network size $\tilde{n}$ for $t=0,1$ and $t=\infty$:

\begin{lemma}
The estimator $\tilde{Z}(t)$ for $t=0,1$ and $t=\infty$ is proportional to the BLUE of the network size $\tilde{n}$. In other words, it can be written as follows:
\begin{align}
\tilde{Z}(0) = N_{\max} \alpha_0 \tilde{n} \nonumber \\
\tilde{Z}(1) = N_{\max} \alpha_0 \alpha_1 \tilde{n} \nonumber \\
\lim_{t \rightarrow \infty} \tilde{Z}(t) = \tilde{n},
\end{align}
where $\alpha_0 = \cfrac{K}{N_{\max}^2}$ and $\alpha_1 =(1+ \cfrac{N_{\max}-K}{LW}\pi R^2f(1-q))$.
\label{l6}
\end{lemma}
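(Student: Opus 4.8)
The plan is to compute the three regimes $t=0$, $t=1$, and $t\to\infty$ separately, in each case evaluating $\mathbb{E}[\tilde Z(t)]=\sum_{i=1}^{N_{\max}}\mathbb{P}(Z_i(t)=1)$ and comparing it with the BLUE of $\tilde n$. Since $Z_i(t)=\max_{s_j\in\mathcal K}X_{ji}(t)$ and $\mathcal K$ is drawn uniformly without replacement from the $N_{\max}$ IDs, the event $Z_i(t)=1$ requires both that $s_i$ is alive (only alive nodes ever appear in any packet) and that some queried node has $s_i$ in its buffer at time $t$. So throughout I would condition on $s_i\in\mathcal A$, which contributes the factor that ultimately produces the $\tilde n$ dependence, and then handle the ``some queried node knows $s_i$'' probability using the geometry of the random deployment.

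First, for $t=0$: at initialization $P_{s_j}=\{ID_j\}$, so $X_{ji}(0)=1$ iff $j=i$. Then $Z_i(0)=1$ iff $s_i\in\mathcal A$ and $s_i\in\mathcal K$; the latter has probability $K/N_{\max}$. Hence $\mathbb{P}(Z_i(0)=1)=\tfrac{K}{N_{\max}}\mathbf 1\{s_i\in\mathcal A\}$ and, recalling that the BLUE of $\tilde n$ scales $\tilde Z(0)$ up by the sampling inverse, $\tilde Z(0)=N_{\max}\alpha_0\tilde n$ with $\alpha_0=K/N_{\max}^2$ follows after identifying $\sum_i\mathbf 1\{s_i\in\mathcal A\}=n$ and the BLUE normalization. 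Second, for $t=1$: after one knowledge-distribution round, a node $s_j$ knows $s_i$ if either $j=i$ or $s_i$ broadcast at $t=0$ and $s_j$ is an (alive) neighbour of $s_i$ that did not erase. The probability that a given other alive node lies within $s_i$'s disk of radius $R$ is $\pi R^2/(LW)$ by uniformity, the broadcast happens with probability $f$, and reception succeeds with probability $1-q$; multiplying by the number of alive non-$\mathcal K$ candidates $N_{\max}-K$ (to the leading order used in the statement) and adding to the $t=0$ term gives the correction factor $\alpha_1=1+\tfrac{N_{\max}-K}{LW}\pi R^2 f(1-q)$, so $\tilde Z(1)=N_{\max}\alpha_0\alpha_1\tilde n$. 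Third, for $t\to\infty$: I would argue that on the giant/connected component the buffers saturate, so every alive node eventually holds every alive ID, whence $X_{ji}(t)\to\mathbf 1\{s_i\in\mathcal A\}$ for every alive $j$; then $Z_i(\infty)=\mathbf 1\{s_i\in\mathcal A\}$, $\mathbb{E}[\tilde Z(\infty)]=n=\tilde n$, and the estimator coincides with the BLUE (coefficient $1$). Monotonicity of the buffers in $t$ (they only grow) plus the increasing-and-reset transmit probabilities guarantee this limit exists and is reached.

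The plan is to organize the write-up as three short claims matching the three displayed equations, invoking \dref{} for $S_i$ and $\mathcal B_t(s_i)$, the protocol description of \algref{alg:knowledge} for the buffer-update and broadcast rules, and \algref{alg:collection} for $\tilde Z=|\tilde P|$. The main obstacle I anticipate is the $t=1$ computation: one must be careful that (i) the queried set $\mathcal K$ and the event ``$s_i$ is alive'' are not independent of which nodes can relay $s_i$'s ID, (ii) distinct queried nodes can both know $s_i$, so a strict inclusion--exclusion over $\mathcal K$ is needed and only its leading term survives in the stated linear form, and (iii) the geometric probability $\pi R^2/(LW)$ tacitly ignores boundary effects of the rectangle $A$, which must be acknowledged as the ``dense graph'' approximation already flagged in the abstract. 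Relating the raw count $\tilde Z(t)$ to the \emph{BLUE} rather than merely to $\mathbb{E}[\tilde Z(t)]$ also requires recalling the Horvitz--Thompson-type inversion that defines the BLUE here; I would state that normalization explicitly before equating coefficients. The $t=0$ and $t=\infty$ cases should then be essentially immediate.
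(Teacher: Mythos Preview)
Your proposal is correct and follows essentially the same route as the paper: split into the three regimes, compute $\mathbb{P}(Z_i(t)=1)$ by conditioning on $s_i\in\mathcal{A}$ and $s_i\in\mathcal{K}$, use the uniform-placement probability $\pi R^2/(LW)$ and the broadcast/erasure factors $f(1-q)$ for $t=1$, argue buffer saturation on the connected network for $t\to\infty$, and then invoke the BLUE characterization (the paper packages this as a separate theorem showing that i.i.d.\ Bernoulli$(n\alpha)$ observations yield $\tilde n=\sum_i Z_i/(N\alpha)$ as the BLUE, which is exactly your Horvitz--Thompson normalization). Your explicit flags about inclusion--exclusion leading terms, boundary effects, and the dense-network approximation at $t=\infty$ are in fact the same caveats the paper relies on, though it states them less prominently; one small wording slip is that the factor $N_{\max}-K$ in $\alpha_1$ arises from $\mathbb{P}(s_i\notin\mathcal{K})=(N_{\max}-K)/N_{\max}$ rather than from a count of ``alive non-$\mathcal{K}$ candidates,'' so adjust that sentence when you write it up.
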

\begin{proof}
The proof can be found in \appref{ap6}.
\end{proof}

From the expressions proposed in \lref{l6}, it is clear that when the number of queried nodes $K = N_{\max}$, then the estimator $\tilde{Z}(t)$ is equal to the BLUE of the network size. Such property linking the counting estimator to the BLUE is conjectured to be valid of all time instant $t$:
\begin{conjecture}
The estimator $\tilde{Z}(t)$ is the proportional to the BLUE $\tilde{n}$ of the network size and can be written as:
\begin{align}
\tilde{Z}(t) = N_{\max} \prod_{k=0}^t \alpha_k \tilde{n},
\end{align}
with
\begin{align}
\prod_{k=0}^\infty \alpha_k &= 1/N_{\max} \text{ and } \lim_{K \rightarrow N_{\max}} \prod_{k=0}^t \alpha_k &= 1/N_{\max}.
\end{align}
\end{conjecture}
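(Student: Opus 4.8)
The plan is to prove the conjecture by pinning the factors $\alpha_k$ to the telescoping ratios of $\tilde{Z}(t)$ and then reducing both product identities to facts already secured by \lref{l6}. Concretely, I would set
\begin{align}
\alpha_0 \define \frac{\tilde{Z}(0)}{N_{\max}\tilde{n}}, \qquad \alpha_t \define \frac{\tilde{Z}(t)}{\tilde{Z}(t-1)} \quad (t\geq 1),
\end{align}
so that $N_{\max}\prod_{k=0}^{t}\alpha_k\,\tilde{n}=\tilde{Z}(t)$ holds identically and the product representation in the conjecture is automatic. Crucially, \lref{l6} forces these ratios to coincide with the closed forms already derived there, $\alpha_0=K/N_{\max}^2$ and $\alpha_1=1+\frac{N_{\max}-K}{LW}\pi R^2 f(1-q)$, so the $\alpha_k$ are the genuine continuation of the lemma rather than arbitrary quotients. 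With the representation in hand, the substance of the conjecture is exactly the two limiting identities, which I treat separately.

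For $\prod_{k=0}^{\infty}\alpha_k=1/N_{\max}$ I would use the representation at time $t$, namely $\prod_{k=0}^{t}\alpha_k=\tilde{Z}(t)/(N_{\max}\tilde{n})$, and pass to the limit:
\begin{align}
\prod_{k=0}^{\infty}\alpha_k=\frac{1}{N_{\max}\tilde{n}}\lim_{t\to\infty}\tilde{Z}(t)=\frac{1}{N_{\max}\tilde{n}}\cdot\tilde{n}=\frac{1}{N_{\max}},
\end{align}
where the middle step is precisely the infinite-time limit $\lim_{t\to\infty}\tilde{Z}(t)=\tilde{n}$ established in \lref{l6}. The only additional ingredient I must supply is that the random geometric graph on the $n$ alive nodes is connected (which holds with high probability in the dense regime $n\pi R^2/(LW)\gg\log n$), so that the $t$-degree neighbourhood $\mathcal{B}_t(s_i)$ eventually exhausts $\mathcal{A}$ and full propagation indeed reveals all $\tilde{n}$ alive identifiers to the data collector.

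For $\lim_{K\to N_{\max}}\prod_{k=0}^{t}\alpha_k=1/N_{\max}$ I would exploit the degeneracy of the query when every node is asked. If $K=N_{\max}$ then $\tilde{P}=\bigcup_{s_i\in\mathcal{K}}P_{s_i}$ contains the singleton $\{ID_i\}$ of every alive node already at $t=0$, since each alive node reports at least its own identifier irrespective of any propagation. Hence $\tilde{Z}(t)=|\mathcal{A}|=\tilde{n}$ for \emph{every} $t\geq 0$ in this limit, giving $\lim_{K\to N_{\max}}\prod_{k=0}^{t}\alpha_k=\lim_{K\to N_{\max}}\tilde{Z}(t)/(N_{\max}\tilde{n})=1/N_{\max}$. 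This is consistent term-by-term with the telescoping definition: in the same limit $\alpha_0\to\tilde{n}/(N_{\max}\tilde{n})=1/N_{\max}$ while $\alpha_t\to\tilde{n}/\tilde{n}=1$ for each $t\geq 1$, matching the vanishing of the $(N_{\max}-K)$ prefactor in $\alpha_1$.

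The hard part is upgrading the telescoping factors to bona fide \emph{closed forms} valid for every $t$, extending the pattern set by $\alpha_0$ and $\alpha_1$. This asks for a one-step recursion $\tilde{Z}(t)=\alpha_t\tilde{Z}(t-1)$ obtained from the frontier growth $\mathcal{B}_t(s_i)\setminus\mathcal{B}_{t-1}(s_i)$, and its derivation must handle three coupled difficulties: an inclusion--exclusion (or mean-field) accounting of the overlap between successive hop neighbourhoods in the uniform geometric model; the compounding of the per-hop survival probability $f(1-q)$ across $t$ hops; and the dynamic transmit probability, whose update $f_{s_i}\leftarrow\frac{1}{2}(f_{s_i}+1)$ ties the chain to the arrival of innovative packets. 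I would tackle the last point by replacing $f_{s_i}$ with its stationary mean under the reset-after-transmission rule, collapsing the dynamics to a fixed effective $f$; the remaining and decisive obstacle, the reason the statement is only conjectured, is to show that the resulting frontier factors admit a clean expression (plausibly $\alpha_t=1+\beta\rho^{\,t-1}$ with a decay $\rho<1$ from the shrinking frontier) whose infinite product equals exactly $N_{\max}/K$, so that $\alpha_0\prod_{k\geq 1}\alpha_k=(K/N_{\max}^2)(N_{\max}/K)=1/N_{\max}$ reproduces the first identity without any approximation.
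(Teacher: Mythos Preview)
The paper does not prove this statement: it is stated explicitly as a \emph{conjecture}, motivated by the special cases $t=0,1,\infty$ of \lref{l6}, and the conclusion lists its demonstration as future work. There is therefore no paper proof to compare against.

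Your proposal does correctly isolate the two limiting identities as consequences of \lref{l6} \emph{once} the product representation with deterministic $\alpha_k$ is in hand, and your $K\to N_{\max}$ argument (every alive node reports its own ID, so $\tilde Z(t)=\tilde n$ for all $t$) is exactly the observation the paper makes just before stating the conjecture. You are also candid that the ``hard part'' is left open.

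The genuine gap, however, is earlier than where you place it. Defining $\alpha_t\define\tilde Z(t)/\tilde Z(t-1)$ makes the product identity a tautology, but then the $\alpha_k$ are random, realization-dependent ratios rather than the deterministic system constants the conjecture requires (cf.\ $\alpha_0=K/N_{\max}^2$ and $\alpha_1$ in \lref{l6}). The substantive claim of the conjecture is part (a): that for every $t$ the variables $Z_i(t)$ are i.i.d.\ Bernoulli with success probability of the form $n\cdot c_t$ for a constant $c_t$ independent of $n$ and $i$, which via \thref{th2} is what makes $\tilde Z(t)$ a deterministic multiple of the BLUE. Your telescoping construction silently assumes this and hence does not prove it; the later discussion of frontier growth, inclusion--exclusion, and the dynamic transmit probability is precisely the analysis one would need to establish that $\mathds{P}(Z_i(t)=1)=n\,c_t$ for general $t$, and it remains open in both your write-up and the paper.
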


\section{Simulation Results} \label{sec:sim}

This section presents the simulation results of the proposed counting algorithm. In all the simulations, the bound is set to $N_{\max}=350$ for a network containing $n=300$ nodes. The field is set to the unit square, the connectivity radius to $R=0.1$ and the average packet erasure to $Q=0.1$. Due to space limitations, the performance of the network size estimator is not presented.

\begin{figure}[ht]
\centering
\includegraphics[width=0.92\linewidth]{./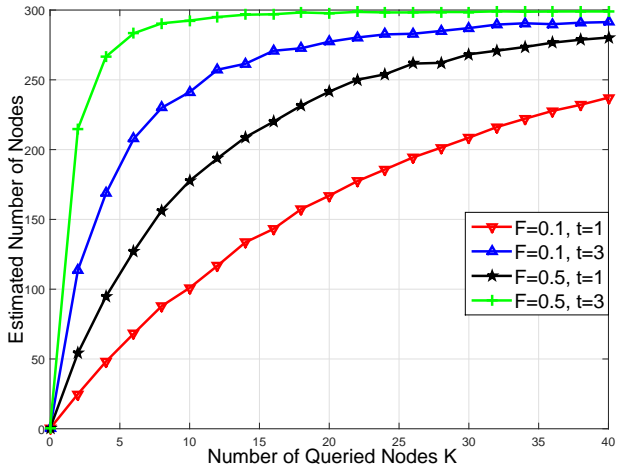}
\caption{Number of Queried nodes versus the number of estimated nodes for different combination of query time $t$ and initial transmit probability $F$. The network contains $n=300$ nodes bounded by $N_{\max}=350$. The connectivity radius is $R=0.1$ and the average erasure $Q=0.1$.}\label{fig:NoQuriednodesTF}
\end{figure}

\begin{figure}[ht]
\centering
\includegraphics[width=0.92\linewidth]{./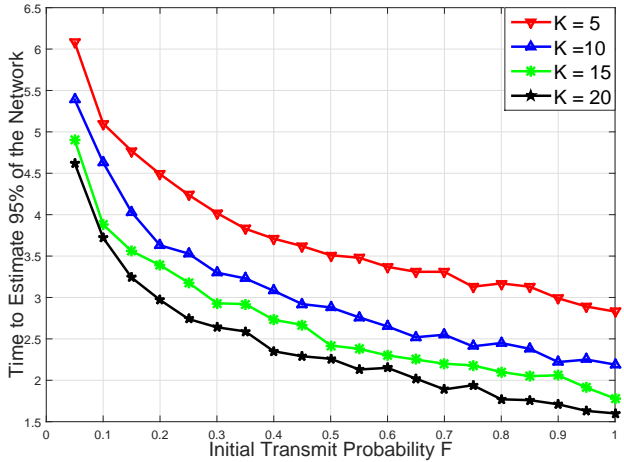}\\
\caption{Initial transmit probability $F$ versus the average time to perform $95\%$ estimation of the network for different number of queried nodes $K$. The network contains $n=300$ nodes bounded by $N_{\max}=350$. The connectivity radius is $R=0.1$ and the average erasure $Q=0.1$.}\label{fig:probFtimeT}
\end{figure}

\begin{figure}[ht]
\centering
\includegraphics[width=0.92\linewidth]{./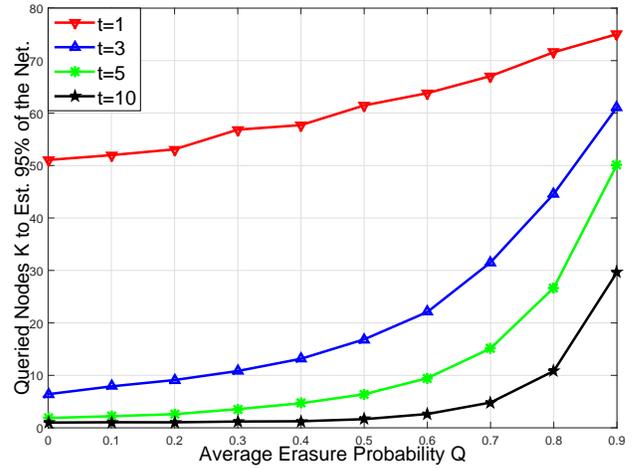}\\
\caption{Number of queried nodes $K$ versus the average erasure probability to perform $95\%$ estimation of the network for different running times $t$. The network contains $n=300$ nodes bounded by $N_{\max}=350$. The connectivity radius is $R=0.1$.}\label{fig:probKtimeF}
\end{figure}

\begin{figure}[ht]
\centering
\includegraphics[width=0.92\linewidth]{./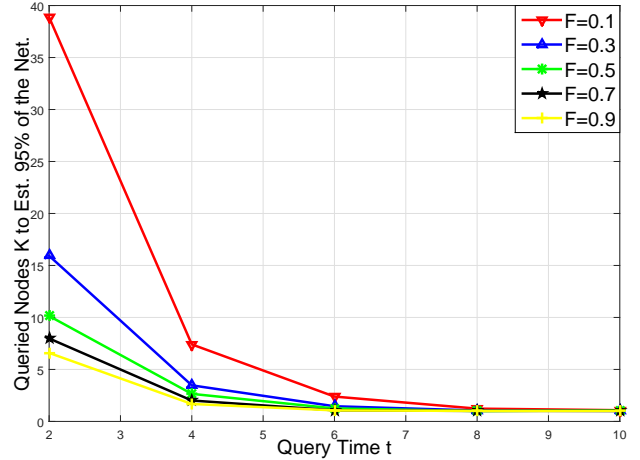}\\
\caption{Query time $t$ versus the average number of queried nodes to perform $95\%$ estimation of the network for different initial transmit probability $F$. The network contains $n=300$ nodes bounded by $N_{\max}=350$. The connectivity radius is $R=0.1$ and the average erasure $Q=0.1$.}\label{fig:timeTQueriedK}
\end{figure}

\fref{fig:NoQuriednodesTF} shows the relation between the number of queried nodes and number of estimated nodes in the network at varied query time $t$ and transmit probability $F$. We notice that asking $10\%$ or more of nodes gives a good estimation of the network size. Besides, increasing the initial transmit probability $F$ or the query time $t$ results in an enhancement of the performances.

\fref{fig:probFtimeT} shows the initial transmit probability $F$ versus the time $t$, in which the total estimation of network nodes is $95\%$ or more for various queried nodes $K$. One can notice that for fixed $F=0.5$, increasing the queried nodes from $K=10$ to $K=20$, reduces the average time $t$ to disseminate the node's information in the network.

\fref{fig:probKtimeF} illustrates the number of queried nodes $K$ versus the average erasure probability, in which the total estimation of network nodes is $95\%$ or more for different running times $t$. As expected, the number of queried nodes to perform $95\%$ estimation of the whole network size decreases with the number of iteration of the algorithm. This can be explained by the fact that as the number of iteration increases, each node have more knowledge about the network configuration that results in a less queried nodes.

\fref{fig:timeTQueriedK} shows the relationship between the queried time $t$ versus the mean number of queried nodes $K$ to achieve $95\%$ or more of the total estimation of network size. We first note that for $t=8$, the counting estimator reached the BLUE. Hence, for our setting, $t=8$ is sufficient for the condition $t \rightarrow \infty$. We also note that increasing the initial transmit probability results in an improvement the estimation of the network size.

\section{Conclusion} \label{sec:con}

This work introduces a novel hybrid size estimation algorithm in an anonymous graph, in which each node knows only its unique identifier. A node counting algorithm is proposed whose output can be used to perform network size estimation using statistical inference methods. For dense graphs and accurate running times, the paper shows that the proposed algorithm produces an estimate of the total number of nodes proportional to the BLUE and that it converges when all the network nodes are queried. Simulation results show that the proposed algorithm produces a good estimate when either the running time or the number of queried nodes are reasonable. As a future research direction, the proposed conjecture can be demonstrated, and the result of the paper can be extended to networks with nodes having non-unique IDs or non-maintaining fixed network topology.

\appendices

\numberwithin{equation}{section}

\section{Proof of \lref{l6}} \label{ap6}

This section provides the proof of \lref{l6}. The proofs rely on auxiliary results of \thref{th1}, \thref{th2}, \lref{l1}, and \lref{l2} that are available in \appref{ap1}.

\subsection{Performance for $\tilde{Z}(0)$}

\begin{lemma}
The estimator $\tilde{Z}(t)$ for $t=0$ is the proportional to the BLUE $\tilde{n}$ of the network size. In other words, we have:
\begin{align}
\tilde{Z}(0) = N_{\max} \alpha_0 \tilde{n},
\end{align}
where $\alpha_0 = K/N_{\max}^2$.
\label{l3}
\end{lemma}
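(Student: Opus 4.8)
The plan is to use the fact that at $t=0$ the knowledge distribution phase (\algref{alg:knowledge}) has not yet been executed, so no buffer has been modified and the erasure and connectivity parameters play no role. First I would observe that, by \algref{alg:initial}, every alive node $s_i \in \mathcal{A}$ holds exactly $P_{s_i} = \{ID_i\}$ while every dead node carries the empty packet; therefore, when the DC runs \algref{alg:collection} at $t=0$, the collected set $\tilde P = \bigcup_{s_i \in \mathcal{K}} P_{s_i}$ is precisely the set of identifiers of the queried nodes that are alive, so that $\tilde Z(0) = |\tilde P| = |\mathcal{K} \cap \mathcal{A}|$.

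Second, I would analyse the sampling mechanism. Since $\mathcal{K}$ is formed by sampling uniformly without replacement from the $N_{\max}$ initial identifiers, of which exactly $n = |\mathcal{A}|$ belong to alive nodes, $|\mathcal{K} \cap \mathcal{A}|$ is hypergeometric with population $N_{\max}$, $n$ \emph{successes} and $K$ draws; equivalently, writing $\tilde Z(0) = \sum_{i=1}^{N_{\max}} Z_i(0)$ with $Z_i(0) = \max_{s_j \in \mathcal{K}} X_{ji}(0)$, one has $Z_i(0) = 1$ iff $s_i \in \mathcal{K}$ and $s_i$ is alive, so the mean of $\tilde Z(0)$ equals $Kn/N_{\max}$. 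Thus $\tilde Z(0)$ tracks $n$ up to the deterministic activity/sampling factor $K/N_{\max}$.

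Third, I would invoke the linear-model characterisation of the BLUE established in \thref{th1}, \thref{th2}, \lref{l1} and \lref{l2} of \appref{ap1}: the best linear unbiased estimator of the network size obtained from the observable available at $t=0$ is the one that inverts this factor, namely $\tilde n = (N_{\max}/K)\,\tilde Z(0)$. Rearranging gives $\tilde Z(0) = (K/N_{\max})\,\tilde n = N_{\max}\,\alpha_0\,\tilde n$ with $\alpha_0 = K/N_{\max}^2$, which is the claim.

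The combinatorial part is routine; the step I expect to require the most care is the third one, i.e. checking that the generic BLUE construction of \appref{ap1} specialises correctly to the degenerate $t=0$ case, so that the proportionality constant is exactly $K/N_{\max}$ and carries no dependence on the erasure probability $q$ or the radius $R$ — these enter the factors $\alpha_k$ only for $k \geq 1$, once nodes actually start broadcasting.
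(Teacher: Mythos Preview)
Your argument is correct and follows essentially the same route as the paper: identify that at $t=0$ each $Z_i(0)=1$ exactly when $s_i\in\mathcal{K}\cap\mathcal{A}$, compute $\mathds{P}(Z_i(0)=1)=nK/N_{\max}^2=n\alpha_0$, and then invoke the BLUE result to obtain $\tilde n=\tilde Z(0)/(N_{\max}\alpha_0)$. Two small remarks: only \thref{th2} is actually used here (\thref{th1}, \lref{l1}, \lref{l2} play no role at $t=0$), and the paper applies \thref{th2} by treating the $Z_i(0)$ as i.i.d.\ Bernoulli($n\alpha_0$) rather than via the exact hypergeometric law you mention---so when you ``specialise'' in your third step, make sure you match that i.i.d.\ modelling assumption so that \thref{th2} applies verbatim.
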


\begin{proof}
At time $t=0$, from the initialisation part of the packets $P_{s_i}$ of a node $s_i \in \mathcal{A}$ in \algref{alg:initial}, we have:
\begin{align}
\mathds{P}(X_{ij}(0) = 1) =
\begin{cases} 
1 \hspace{2cm} &\text{if } j=i \\
0 \hspace{2cm} &\text{otherwise}. 
\end{cases}
\end{align}

Hence, for an arbitrary node $s_i \in \mathcal{N}$, we have:
\begin{align}
\mathds{P}(X_{ii}(0) = 1) &= \mathds{P}(X_{ii}(0) = 1| s_i \in \mathcal{A})\mathds{P}(s_i \in \mathcal{A}) \nonumber \\ 
& + \mathds{P}(X_{ii}(0) = 1| s_i \in \mathcal{D})\mathds{P}(s_i \in \mathcal{D}).
\end{align}
Since $ \mathds{P}(X_{ii}(0) = 1| s_i \in \mathcal{D})=0$, then $\mathds{P}(X_{ii}(0) = 1) = \cfrac{n}{N_{\max}}$. Hence, we can write: 
\begin{align}
Z_i (0) &= \max_{s_j \in \mathcal{K}} X_{ji}(0) \nonumber \\
&=\begin{cases}
X_{ii}(t) \hspace{2cm} &\text{if } s_i \in \mathcal{K} \\
0 \hspace{2cm} &\text{otherwise}. 
\end{cases}
\end{align}
Therefore, we obtain:
\begin{align}
\mathds{P}(Z_{i}(0) = 1) &= \mathds{P}(Z_{i}(0) = 1| s_i \in \mathcal{K})\mathds{P}(s_i \in \mathcal{K}) \nonumber \\ 
& + \mathds{P}(Z_{i}(0) = 1| s_i \notin \mathcal{K})\mathds{P}(s_i \notin \mathcal{K}) \nonumber \\ 
&= \mathds{P}(X_{ii}(0) = 1)\mathds{P}(s_i \in \mathcal{K}) = \cfrac{nK}{N_{\max}^2} \nonumber \\
&= n \alpha_0 .
\end{align}
Using \thref{th2}, the estimator $\tilde{Z}$ is proportional to the BLUE estimate of $n$.
\end{proof}

\subsection{Performance for $\tilde{Z}(1)$}

\begin{lemma}
The estimator $\tilde{Z}(t)$ for $t=1$ is the proportional to the BLUE $\tilde{n}$ of the network size. In other words, we have:
\begin{align}
\tilde{Z}(1) = N_{\max} \alpha_0 \alpha_1 \tilde{n},
\end{align}
where $\alpha_0 = \cfrac{K}{N_{\max}^2}$ and $\alpha_1 =(1+ \cfrac{N_{\max}-K}{LW}\pi R^2f(1-q))$.
\label{l4}
\end{lemma}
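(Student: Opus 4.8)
The plan is to reuse the argument in the proof of \lref{l3} and graft onto it the single round of knowledge dissemination that \algref{alg:knowledge} performs before the query at $t=1$. First I would pin down the one-hop knowledge probabilities. Since $\mathcal{T}(0)=\varnothing$, the inner loop of \algref{alg:knowledge} does nothing during that iteration, so no node updates its buffer and the update $f_{s_i}=\frac{1}{2}(f_{s_i}+1)$ is never triggered; every alive node $s_j$ simply broadcasts $P_{s_j}=\{ID_j\}$ independently with probability $f$. Consequently, for $i\neq j$, node $s_j$ has heard of $s_i$ by time $t=1$ exactly when $s_i$ is alive, broadcasts, is a neighbour of $s_j$, and the broadcast is not erased at $s_j$, so that
\begin{align}
\mathds{P}(X_{ji}(1)=1)=\mathds{P}(s_i\in\mathcal{A})\,\mathds{P}(s_j\in\mathcal{A})\,\mathds{P}(s_i\in S_j)\,f(1-q),
\end{align}
whereas $\mathds{P}(X_{ii}(1)=1)=\mathds{P}(s_i\in\mathcal{A})=n/N_{\max}$, exactly as at $t=0$. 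For the geometric factor I would invoke \lref{l1} and \lref{l2}: two points drawn uniformly and independently in $A=[0,L]\times[0,W]$ lie within distance $R$ with probability $\pi R^2/(LW)$ up to the standard boundary correction, so I set $p:=\frac{\pi R^{2}}{LW}f(1-q)$ and, in the dense regime $n\approx N_{\max}$, absorb the factor $\mathds{P}(s_j\in\mathcal{A})\approx 1$.

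Next I would propagate this through $Z_i(1)=\max_{s_j\in\mathcal{K}}X_{ji}(1)$. Conditioning on $s_i\in\mathcal{A}$, the event $\{Z_i(1)=0\}$ means that $s_i$ was not itself queried \emph{and} none of the $K$ queried nodes has heard of it; treating the $K$ neighbourhood indicators $\{s_i\in S_j\}_{s_j\in\mathcal{K}}$ as approximately independent and retaining only first-order terms in $p$,
\begin{align}
\mathds{P}(Z_i(1)=1\mid s_i\in\mathcal{A}) &= \frac{K}{N_{\max}} + \Big(1-\frac{K}{N_{\max}}\Big)\big(1-(1-p)^{K}\big) \nonumber\\
&\approx \frac{K}{N_{\max}}\big(1+(N_{\max}-K)\,p\big).
\end{align}
Multiplying by $\mathds{P}(s_i\in\mathcal{A})=n/N_{\max}$ and reading off $\alpha_0=K/N_{\max}^{2}$ and $\alpha_1=1+(N_{\max}-K)p=1+\frac{N_{\max}-K}{LW}\pi R^{2}f(1-q)$ yields $\mathds{P}(Z_i(1)=1)=n\,\alpha_0\,\alpha_1$ for every $i$. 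Exactly as in the last line of the proof of \lref{l3}, I would then invoke \thref{th2} (together with \thref{th1}) to identify the BLUE $\tilde n$ of the network size from the Bernoulli vector $Z(1)$; since each coordinate has success probability $n\,\alpha_0\,\alpha_1$, the count $\tilde Z(1)=\sum_i Z_i(1)$ satisfies $\tilde Z(1)=N_{\max}\,\alpha_0\,\alpha_1\,\tilde n$, which is the claim.

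The main obstacle is turning the middle display from a heuristic into a bound with a controlled remainder: the queried set $\mathcal{K}$ is sampled without replacement and all its nodes live in one common random point configuration, so the indicators $\{s_i\in S_j\}_{s_j\in\mathcal{K}}$ are genuinely correlated, and strictly speaking the alive/dead status of a \emph{relaying} queried node also enters. I expect the correlation and the discarded $O(p^{2})$ terms to be negligible for $K\ll N_{\max}$, and the alive-factor of the relay to be negligible in the dense regime $n\approx N_{\max}$; making these statements precise — together with the boundary correction hidden in $\mathds{P}(s_i\in S_j)=\pi R^{2}/(LW)$ — is the delicate step, and is presumably where the auxiliary estimates \thref{th1}, \lref{l1} and \lref{l2} carry the weight.
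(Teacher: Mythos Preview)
Your proposal is correct and follows essentially the same route as the paper: condition on whether $s_i\in\mathcal{K}$, use the geometric probability $\pi R^{2}/(LW)$ for neighbourhood membership together with the one-hop success probability $f(1-q)$, keep only the first-order term, and then invoke \thref{th2}. The paper differs only cosmetically---it writes $\mathds{P}(Z_i=1\mid s_i\notin\mathcal{K})$ as a sum over $s_j\in\mathcal{K}$ rather than $1-(1-p)^K$---and it does \emph{not} address the issues you flag in your last paragraph (correlation under sampling without replacement, the alive status of the relay, the boundary correction, the $O(p^2)$ remainder); those approximations are made silently in the paper, so your derivation is already at least as careful as the published one.
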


\begin{proof}
At time $t=1$, from the initialisation part of the packets $P_{s_i}$ of a node $s_i \in \mathcal{A}$ in \algref{alg:initial} and \lref{l1}, we have:
\begin{align}
\mathds{P}(X_{ij}(0) = 1) =
\begin{cases} 
1 \hspace{0.5cm} &\text{if } j=i \\
p_{ij} \hspace{0.5cm} &\text{if } s_j \in \mathcal{B}_1(s_i) \setminus \mathcal{B}_0(s_i)\\
0 \hspace{0.5cm} &\text{otherwise}. 
\end{cases}
\end{align}

For node a $s_j \in \mathcal{B}_1 \setminus \mathcal{B}_0$, node $s_i \in \mathcal{A}$ knows it is alive if the two following events occur:
\begin{itemize}
\item Node $s_j$ transmit its packet. This event happens with probability $f_{s_j}$.
\item The packet transmitted from $s_j$ to $s_i$ is successfully received. This event happens with probability $1-q_{ji}$.
\end{itemize}

Given that the events are independent, hence the probability $p_{ij}$ can be expressed as $p_{ij}=f_{s_j}(1-q_{ji})$. The probability that a node $s_i$ knows that a node $s_j$ is alive can therefore be expressed as:
\begin{align}
&\mathds{P}(X_{ij}(0) = 1) = \mathds{P}(X_{ij}(0) = 1| s_i \in \mathcal{A})\mathds{P}(s_i \in \mathcal{A}) \nonumber \\ 
& \qquad \qquad + \mathds{P}(X_{ij}(0) = 1| s_i \in \mathcal{D})\mathds{P}(s_i \in \mathcal{D}) \nonumber \\ 
&=\mathds{P}(X_{ij}(0) = 1| s_i \in \mathcal{A})\cfrac{n}{N_{\max}}
\nonumber \\ 
& =\cfrac{n}{N_{\max}} 
\begin{cases}
1 \hspace{0.5cm} &\text{if } j=i \\
f_{s_j}(1-q_{ji}) \hspace{0.5cm} &\text{if } s_j \in \mathcal{B}_1(s_i) \setminus \mathcal{B}_0(s_i)\\
0 \hspace{0.5cm} &\text{otherwise}. 
\end{cases}
\end{align}

We obtain the expression of $Z_i(t)$ for $t=1$ as follows:
\begin{align}
\mathds{P}(Z_{i}(0) = 1) &= \mathds{P}(Z_{i}(0) = 1| s_i \in \mathcal{K})\mathds{P}(s_i \in \mathcal{K}) \nonumber \\ 
& + \mathds{P}(Z_{i}(0) = 1| s_i \notin \mathcal{K})\mathds{P}(s_i \notin \mathcal{K}) \nonumber \\ 
&= \cfrac{nK}{N_{\max}^2} + \cfrac{N_{\max}-K}{N_{\max}}\mathds{P}(Z_{i}(0) = 1| s_i \notin \mathcal{K}).
\end{align}

The second term can be expressed as:
\begin{align}
&\mathds{P}(Z_{i}(0) = 1| s_i \notin \mathcal{K}) = \nonumber \\
&\sum_{s_j \in \mathcal{K}} \mathds{P}(Z_{i}(0) = 1| s_i \in \mathcal{B}_j(1) \setminus \mathcal{B}_j(0))\mathds{P}(s_i \in \mathcal{B}_j(1) \setminus \mathcal{B}_j(0)) \nonumber \\
&+ \mathds{P}(Z_{i}(0) = 1| s_i \notin \bigcup_{s_j \in \mathcal{K}}\mathcal{B}_j(1)) \mathds{P}( s_i \notin \bigcup_{s_j \in \mathcal{K}}\mathcal{B}_j(1)).
\end{align}

Note that we removed the conditioning $ s_i \notin \mathcal{K}$ only for clarity. We first compute $\mathds{P}(s_i \in \mathcal{B}_j(1) \setminus \mathcal{B}_j(0))$. From the connectivity condition of two nodes in the network, the probability can be expressed as $\mathds{P}(d(s_i,s_j)<R)$, where $R$ is the connectivity radius. The nodes are uniformly distributed in a rectangle of width $W$ and length $L$. Therefore, we have:
\begin{align}
\mathds{P}(d(s_i,s_j)<R) = \cfrac{\pi R^2}{LW}
\end{align}
The term can be simplified as
\begin{align}
&\mathds{P}(Z_{i}(0) = 1| s_i \notin \mathcal{K}) =  \cfrac{\pi R^2}{LW} \cfrac{n}{N_{\max}}  \sum_{s_j \in \mathcal{K}} f_{s_i}(1-q_{ij}).
\end{align}
If all the node have the same erasure probability and initial transmit probability, the term can further be simplified as:
\begin{align}
&\mathds{P}(Z_{i}(0) = 1| s_i \notin \mathcal{K}) =  \cfrac{\pi R^2}{LW} \cfrac{Kn}{N_{\max}}  f(1-q).
\end{align}
The probability that node $s_i$ is alive can therefore be written as:
\begin{align}
\mathds{P}(Z_{i}(0) = 1) &= \cfrac{nK}{N_{\max}^2} (1+ \cfrac{N_{\max}-K}{LW}\pi R^2f(1-q)) \nonumber \\
&= n \alpha_0 (1+ \cfrac{N_{\max}-K}{LW}\pi R^2f(1-q)) \nonumber \\
&= n \alpha_0 \alpha_1.
\end{align}
\end{proof}

\subsection{Performance for $\tilde{Z}(\infty)$}

\begin{lemma}
The limit of the BLUE $\tilde{n}$ of the network size goes to $\tilde{Z}(t)$ as $t$ goes to $\infty$. In other words, we have:
\begin{align}
\lim_{t \rightarrow \infty} \tilde{Z}(t) = \tilde{n}.
\end{align}
\label{l5}
\end{lemma}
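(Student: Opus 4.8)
The plan is to show that after enough rounds the knowledge‑distribution phase \emph{saturates}, so that the buffer held by every surviving node equals the full set $\mathcal{A}$ of alive IDs, and then to feed this deterministic limit into the BLUE characterization of \thref{th2} in exactly the way that \lref{l3} and \lref{l4} do. I would organize the argument in four steps.

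First, I would record two elementary monotonicity facts about \algref{alg:knowledge}. A node only ever \emph{adds} IDs to its buffer: the two updates $P_{s_i}=(P_{s_i}\cup P_{s_j})\setminus ID_i$ and $P_{s_i}=\{P_{s_i},ID_i\}$ together leave $P_{s_i}$ equal to $P_{s_i}\cup P_{s_j}$ as a \emph{set}, so each indicator $X_{ij}(t)$ is non‑decreasing in $t$ and bounded by $1$, hence converges. Moreover a node that is dead at the start never transmits and no initial packet carries a foreign ID, so $X_{ij}(t)=0$ for all $t$ whenever $s_j\in\mathcal{D}$; consequently $Z_j(t)=0$ for all $t$ and $\tilde{Z}(t)=\sum_{s_i\in\mathcal{A}}Z_i(t)$.

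Second — the crux — I would prove that for every pair of alive nodes $s_i,s_j$ in the same connected component of the graph induced on $\mathcal{A}$ one has $X_{ji}(t)\to 1$ almost surely. Because the reset rule keeps $f_{s_\ell}\ge f>0$ at all times, every node attempts a broadcast in infinitely many rounds with probability one, and because $q<1$ each such broadcast is received by a fixed neighbour in infinitely many of those rounds; a Borel--Cantelli / first‑passage argument along a shortest path from $s_j$ to $s_i$, combined with the buffer monotonicity above, then forces $ID_j$ into $P_{s_i}$ in finite (random) time. Under the standing assumption that the alive subgraph is connected — which holds with high probability for the random geometric graph with the parameters of \sref{sec:net} — this gives $P_{s_i}(t)\uparrow\mathcal{A}$, i.e. $X_i(t)\to\mathds{1}_{\mathcal{A}}$, for every $s_i\in\mathcal{A}$.

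Third, I would pass to the query phase. Conditioned on the (overwhelmingly likely, and in the relevant regime guaranteed) event that $\mathcal{K}$ contains at least one alive node, $\tilde{Z}(t)=\big|\bigcup_{s_j\in\mathcal{K}}P_{s_j}(t)\big|\to|\mathcal{A}|=n$ almost surely, since the union already equals $\mathcal{A}$ once the finitely many relevant buffers have saturated (it can never contain a dead ID). Finally, since $\lim_{t\to\infty}\tilde{Z}(t)=n$ is deterministic, its mean equals $n$, so the proportionality constant $N_{\max}\prod_{k\ge 0}\alpha_k$ degenerates to $1$; invoking \thref{th2} exactly as in \lref{l3} and \lref{l4} identifies this limiting observation with its own BLUE, yielding $\lim_{t\to\infty}\tilde{Z}(t)=\tilde{n}$. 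I expect the second step to be the main obstacle: turning ``information eventually reaches everybody'' into a clean almost‑sure statement needs the monotone‑buffer observation together with a careful Borel--Cantelli argument over the random transmission schedule, and it also forces one to state precisely the connectivity hypothesis on $\mathcal{A}$ and to handle (or explicitly exclude) the event that every queried node is dead.
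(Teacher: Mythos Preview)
Your argument is correct and follows the same overall arc as the paper---show that after sufficiently many rounds every alive node's buffer saturates to $\mathcal{A}$, then identify the resulting count with the BLUE via \thref{th2}---but the mechanics differ. The paper works entirely at the level of marginal probabilities: it invokes \lref{l2} to argue that $\mathcal{B}_t(s_i)$ eventually equals $\mathcal{A}$ under connectivity, then asserts that $\mathds{P}(X_{ij}(t)=1)$ is strictly increasing and bounded, hence equals $n/N_{\max}$ at some finite $t^*$; it next applies \thref{th1} to compute $\mathds{P}(Z_i(t^*)=1\mid s_i\in\mathcal{A})=1-\prod_{s_j\in\mathcal{K}}(1-p_{ji})$, splits on $s_i\in\mathcal{K}$ versus $s_i\notin\mathcal{K}$, and in the second case obtains $1-\bigl((N_{\max}-n)/N_{\max}\bigr)^{K}$, which it disposes of either by a dense-network approximation or by assuming at least one queried node is alive, before plugging $\mathds{P}(Z_i=1)=n/N_{\max}$ into \thref{th2}. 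Your route instead establishes almost-sure saturation of the buffers via monotonicity plus a Borel--Cantelli argument along paths, which is actually more careful than the paper's ``strictly increasing and bounded hence attains its bound at some $t^*$'' step, and it bypasses \thref{th1} and the dense-network approximation altogether by noting that once any single buffer in $\mathcal{K}\cap\mathcal{A}$ equals $\mathcal{A}$, so does the union. The trade-off is that your final identification of the deterministic limit $n$ with the BLUE $\tilde{n}$ is a bit informal (a degenerate estimator of a constant from itself), whereas the paper's probability-level computation lands exactly in the $\alpha n$ form that \thref{th2} consumes.
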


\begin{proof}
To proof this lemma, we first compute the MLE estimator $\tilde{n}$ of $n$ as $t\rightarrow \infty$. From \lref{l2}, we note that the average number of alive neighbours of an arbitrary alive node is an increasing function. If we assumed that the whole network is connected, then the average number of alive neighbours of an arbitrary alive node is a strictly increasing function bounded by $n$. Therefore, $\exists \ t_0$ such that $\forall \ t \geq t_0$, we have $\mathcal{B}_t(s_i) = \mathcal{A}, \forall \ s_i \in \mathcal{A}$. Given that $\mathds{P}(X_{ij}(t) = 1)>0, \ \forall \ s_j \in \mathcal{A}, t \geq t_0$ and that is a strictly increasing function bounded by $1$, then $\exists \ t^*$ such that for $s_j \in \mathcal{A}$:

\begin{align}
\mathds{P}(X_{ij}(t^*) = 1) =
\begin{cases}
1 \hspace{2cm} &\text{if } j=i \\
\cfrac{n}{N_{\max}}  \hspace{2cm} &\text{otherwise}. 
\end{cases}
\end{align}
We can write:
\begin{align}
\mathds{P}(Z_i(t^*) = 1) &= \mathds{P}(Z_i(t^*) = 1 | s_i \in \mathcal{A})\mathds{P}(s_i \in \mathcal{A}) \nonumber \\
& +  \mathds{P}(Z_i(t^*) = 1 | s_i \notin \mathcal{A})\mathds{P}(s_i \notin \mathcal{A}).
\end{align}
Since $\mathds{P}(Z_i(t^*) = 1 | s_i \notin \mathcal{A})=0$, then 
\begin{align}
\mathds{P}(Z_i(t^*) = 1) = \cfrac{n}{N_{\max}}\mathds{P}(Z_i(t^*) = 1 | s_i \in \mathcal{A}).
\end{align}
Using \thref{th1}, the probability $\mathds{P}(Z_i(t^*) = 1 | s_i \in \mathcal{A})$ can be written as:
\begin{align}
\mathds{P}(Z_i(t^*) = 1 | s_i \in \mathcal{A}) = 1-\prod\limits_{s_j \in \mathcal{K}}(1-p_{ji}(t^*)),
\end{align}
where $p_{ji}(t^*) = \mathds{P}(X_{ji}(t^*) = 1 | s_i \in \mathcal{A})$. Two cases can be distinguished:
\begin{itemize}
\item $s_i \in \mathcal{K}$: By substituting $p_{ii} = 1$, we have:$\mathds{P}(Z_i(t^*) = 1 | s_i \in \mathcal{A}) = 1$.
\item $s_i \notin \mathcal{K}$: By substituting $p_{ji} = \cfrac{n}{N_{\max}}$, we have:$\mathds{P}(Z_i(t^*) = 1 | s_i \in \mathcal{A}) = 1 - \left(\cfrac{N_{\max}-n}{N_{\max}}\right)^{K}$. For dense networks, we have $\left(\cfrac{N_{\max}-n}{N_{\max}}\right)^{K} \approx 0$, hence $\mathds{P}(Z_i(t^*) = 1 | s_i \in \mathcal{A}) = 1$.
\end{itemize}
In both cases, we obtain $\mathds{P}(Z_i(t^*) = 1) = \cfrac{n}{N_{\max}} = \alpha n, \ \forall \ t \geq t^*$. Another alternative is to assume that among the $K$ queried nodes, at least one of the node is alive. In that case, we directly obtain $\mathds{P}(Z_i(t^*) = 1) = \alpha n, \ \forall \ t \geq t^*$. Using \thref{th2}, the BLUE estimator $\tilde{n}$ of $n$ can be written as:
\begin{align}
\tilde{n} = \cfrac{\sum_{i=1}^{N_{\max}}Z_i}{N_{\max} \alpha} = \sum_{i=1}^{N_{\max}}Z_i = \tilde{Z}.
\end{align}
\end{proof}

\section{Auxiliary Results}\label{ap1}

\subsection{Maximum of Bernoulli Random Variables}

\begin{theorem}
Let $X_1, \ \cdots, X_{n}$ be independent Bernoulli random variable with $\mathds{P}(X_i=1)=p_i,\ 1 \leq i \leq n$. The random variable $Z=\max_{1 \leq i \leq n} X_i$ is a Bernoulli random variable with parameter $p=\mathds{P}(Z=1)=1-\prod\limits_{i=1}^n(1-p_i)$.
\label{th1}
\end{theorem}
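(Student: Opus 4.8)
The plan is to observe that $Z=\max_{1\leq i\leq n}X_i$ inherits the $\{0,1\}$-valued range from the $X_i$, so it is automatically a Bernoulli random variable, and then to compute its parameter by passing to the complementary event $\{Z=0\}$, which factors cleanly under independence.

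First I would note that since each $X_i\in\{0,1\}$, the maximum satisfies $Z\in\{0,1\}$ as well; hence $Z$ is indeed Bernoulli and is completely determined by the single number $\mathds{P}(Z=1)$. Next I would use the elementary identity
\begin{align}
\{Z=0\}=\bigcap_{i=1}^n\{X_i=0\},
\end{align}
which holds because the maximum of finitely many values in $\{0,1\}$ equals $0$ exactly when every one of them equals $0$. Then, invoking the mutual independence of $X_1,\dots,X_n$ (equivalently, of the events $\{X_i=0\}$),
\begin{align}
\mathds{P}(Z=0)=\prod_{i=1}^n\mathds{P}(X_i=0)=\prod_{i=1}^n(1-p_i).
\end{align}
Taking complements yields $\mathds{P}(Z=1)=1-\mathds{P}(Z=0)=1-\prod_{i=1}^n(1-p_i)$, which is the asserted parameter.

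There is essentially no obstacle here; the only point worth a line of care is that independence must be applied to the complementary event $\{Z=0\}$ rather than to $\{Z=1\}$ directly, since $\{Z=1\}=\bigcup_{i=1}^n\{X_i=1\}$ is a union whose probability does not factor, whereas its complement is an intersection and does. An alternative route through inclusion–exclusion on that union is possible but merely reproduces the same product after simplification at greater length, so I would not pursue it.
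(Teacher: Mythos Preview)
Your proof is correct and follows essentially the same approach as the paper: both arguments first note that $Z$ is $\{0,1\}$-valued, then compute $\mathds{P}(Z=0)=\mathds{P}(X_1=0,\dots,X_n=0)=\prod_{i=1}^n(1-p_i)$ via independence, and finally take the complement. Your additional remark about why one should factor $\{Z=0\}$ rather than $\{Z=1\}$ is a nice piece of exposition but does not change the underlying argument.
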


\begin{proof}
Since the only possible values of $X_i,\ 1 \leq i \leq n$ are $0$ and $1$, then the support of $Z$ is $\{0,1\}$. We can clearly see that:
\begin{align}
\mathds{P}(Z=0) &= \mathds{P}(X_1=0, \ \cdots, X_{n}=0) \\ \nonumber
 &\overset{\text{ind}}= \prod\limits_{i=1}^n \mathds{P}(X_i=0) = \prod\limits_{i=1}^n(1-p_i).
\end{align}
Therefore, the random variable $Z=\max_{1 \leq i \leq n} X_i$ is a Bernoulli random variable with parameter $p=1-\prod\limits_{i=1}^n(1-p_i)$.
\end{proof}

\subsection{Best Linear Unbiased Estimator of Bernoulli Random Variables}

\begin{theorem}
Let $X_1, \ \cdots, X_{N}$ be identical independent Bernoulli random variable with $\mathds{P}(X_i=1)=n \alpha,\ 1 \leq i \leq N$ where $\alpha$ is a constant that do not depend on $n$. The quantity $\sum\limits_{i=1}^{N}X_i$ is proportional to the Maximum Likelihood Estimator (MLE) of the quantity $n$. Moreover, the estimator $\tilde{n}= \cfrac{\sum_{i=1}^{N}X_i}{N \alpha}$ is the best linear unbiased estimator (BLUE) of the quantity $n$.
\label{th2}
\end{theorem}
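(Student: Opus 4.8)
The plan is to establish the two assertions separately, each by a short direct computation.

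\textbf{The MLE assertion.} First I would write down the likelihood of the sample $(X_1,\dots,X_N)$ in the parametrization $p=n\alpha$. Since the $X_i$ are i.i.d.\ Bernoulli$(n\alpha)$, writing $S=\sum_{i=1}^N X_i$ gives $L(n)=(n\alpha)^{S}(1-n\alpha)^{N-S}$ and hence the log-likelihood $\ell(n)=S\log(n\alpha)+(N-S)\log(1-n\alpha)$. Differentiating in $n$ and setting $\ell'(n)=\frac{S}{n}-\frac{(N-S)\alpha}{1-n\alpha}=0$, then clearing denominators, yields $S(1-n\alpha)=n\alpha(N-S)$, i.e.\ $S=N\alpha n$, so the stationary point is $\hat n=\frac{S}{N\alpha}$. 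Checking $\ell''(n)=-\frac{S}{n^{2}}-\frac{(N-S)\alpha^{2}}{(1-n\alpha)^{2}}<0$ confirms that this is the global maximizer over the admissible range $n\alpha\in(0,1)$. Thus the MLE equals $\frac{1}{N\alpha}\sum_{i=1}^N X_i$, which is proportional to $\sum_{i=1}^N X_i$, as claimed.

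\textbf{The BLUE assertion.} Next I would take an arbitrary linear estimator $\tilde n=\sum_{i=1}^N a_i X_i$ with deterministic coefficients $a_i$. Unbiasedness demands $\mathds{E}[\tilde n]=\bigl(\sum_{i=1}^N a_i\bigr)n\alpha=n$ for every admissible $n$, which is equivalent to the single linear constraint $\sum_{i=1}^N a_i=1/\alpha$. Using independence, $\mathrm{Var}(\tilde n)=\sum_{i=1}^N a_i^{2}\,\mathrm{Var}(X_i)=n\alpha(1-n\alpha)\sum_{i=1}^N a_i^{2}$, so minimizing the variance reduces to minimizing $\sum_{i=1}^N a_i^{2}$ subject to $\sum_{i=1}^N a_i=1/\alpha$; the factor $n\alpha(1-n\alpha)$ is a common positive scalar and does not influence the minimizer. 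By Cauchy--Schwarz (or a one-line Lagrange-multiplier argument) the minimum is attained uniquely at $a_i=\frac{1}{N\alpha}$ for all $i$, which gives exactly $\tilde n=\frac{1}{N\alpha}\sum_{i=1}^N X_i$, so this estimator is the BLUE.

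\textbf{Anticipated difficulty.} There is no substantial obstacle here; the only subtleties worth flagging are (i) that $\mathrm{Var}(X_i)=n\alpha(1-n\alpha)$ depends on the unknown $n$, so one must observe that it enters the variance of $\tilde n$ only as an overall positive multiple and therefore the optimal coefficients are parameter-free (this is what makes a single BLUE exist), and (ii) maintaining the admissibility constraint $0<n\alpha<1$ so that the log-likelihood and its derivatives are well defined and the interior critical point is genuinely the global maximum. The remaining steps are the routine optimizations above.
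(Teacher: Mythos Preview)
Your proof is correct. The MLE portion matches the paper's argument essentially line for line (likelihood, log-likelihood, first-order condition yielding $\hat n=S/(N\alpha)$); you additionally verify concavity via $\ell''<0$, which the paper omits.

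For the BLUE portion you take a genuinely different route. The paper shows unbiasedness, computes $\mathrm{Var}(\tilde n)=n(1-n\alpha)/(N\alpha)$, and then computes the Fisher information $I(n)=N\alpha/\bigl(n(1-n\alpha)\bigr)$, observing (implicitly) that the variance attains the Cram\'er--Rao lower bound; attainment of the CRLB makes $\tilde n$ the minimum-variance unbiased estimator, and hence in particular the BLUE. Your argument instead optimizes directly over the class of linear unbiased estimators $\sum a_iX_i$ subject to $\sum a_i=1/\alpha$, using Cauchy--Schwarz to identify the equal-weight solution. Your approach is more elementary and proves exactly the stated claim without appealing to Fisher information or the CRLB; the paper's approach actually delivers a stronger conclusion (UMVUE, not merely BLUE) at the cost of a slightly heavier toolkit. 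Your observation that the unknown factor $n\alpha(1-n\alpha)$ drops out of the optimization is precisely what guarantees the optimal coefficients are parameter-free, a point the CRLB route handles automatically.
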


\begin{proof}
The likelihood function of $(X_1, ,\ \cdots, \ X_{N})$ can be written as:
\begin{align}
&f_{X_1, ,\ \cdots, \ X_{N}}(x_1, ,\ \cdots, \ x_{N}) = \prod_{i=1}^{N} (n \alpha)^{x_i} (1-n \alpha)^{1-x_i} \nonumber \\
&\text{log}(f_{X_1, ,\ \cdots, \ X_{N}}) \propto \sum_{i=1}^{N}x_i \text{log}(n) + (1-x_i) \text{log}(1-n \alpha).
\end{align}
Solving the equation $\cfrac{d}{dn}\text{log}(f_{X_1, ,\ \cdots, \ X_{N}})=0$ yields the following MLE:
\begin{align}
\tilde{n} = \cfrac{\sum_{i=1}^{N}X_i}{N \alpha} \ .
\end{align}
Therefore, $\sum\limits_{i=1}^{N}X_i$ is proportional to the MLE of the quantity $n$. The mean of $\tilde{n}$ can be expressed as:
\begin{align}
\mathds{E}(\tilde{n}) = \cfrac{\sum_{i=1}^{N}\mathds{E}(X_i)}{N \alpha} = \cfrac{\sum_{i=1}^{N}n \alpha}{N \alpha} = n.
\end{align}
which conclude that the estimator is unbiased. The variance can be obtained as follows:
\begin{align}
\text{Var}(\tilde{n}) &= \cfrac{\sum_{i=1}^N \text{Var}(X_i)}{N^2 \alpha^2} = \cfrac{\sum_{i=1}^{N}(n \alpha)(1-n \alpha)}{N^2 \alpha^2} \nonumber \\
& = \cfrac{n(1-n \alpha)}{N \alpha} \ .
\end{align}
Computing the Fisher information yields:
\begin{align}
\mathds{E}\left(-\cfrac{d^2}{dn^2}\text{log}(f_{X_1, ,\ \cdots, \ X_{N}})\right) &= \mathds{E}\left(\cfrac{\sum_{i=1}^{N}X_i}{n^2} \right) \nonumber \\
& \quad+ \mathds{E}\left(\cfrac{\alpha n \sum_{i=1}^{N}(1-X_i)}{(1-n\alpha)^2}\right) \nonumber \\
&= \cfrac{N \alpha}{n} + \cfrac{N n\alpha}{1-n \alpha} \nonumber \\
&= \cfrac{N \alpha}{n(1-n \alpha)} \ .
\end{align}
Finally, $\tilde{n}$ is the BLUE of the quantity $n$.
\end{proof}

\subsection{Maximum Number of Reachable Nodes}

\begin{lemma}
The maximum number of neighbours node $s_i$ can know at time instant $t$ is $|\mathcal{B}_t(s_i)|$.
\label{l1}
\end{lemma}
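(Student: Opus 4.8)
The plan is to prove the statement by induction on the time index $t$, tracking the set of identifiers that node $s_i$ has stored, $\mathcal{W}_i(t) := \{\, s_j \in \mathcal{V} : X_{ij}(t) = 1 \,\}$ (this is precisely the set of IDs appearing in $P_{s_i}$ at time $t$), and showing the inclusion $\mathcal{W}_i(t) \subseteq \mathcal{B}_t(s_i)$ for every node $s_i$; the claimed bound $|\mathcal{W}_i(t)| \le |\mathcal{B}_t(s_i)|$ then follows immediately. For the base case $t=0$, \algref{alg:initial} initialises $P_{s_i} = \{ID_i\}$, so $\mathcal{W}_i(0) = \{s_i\} = \mathcal{B}_0(s_i)$.

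For the inductive step I would use two structural facts. First, from \algref{alg:knowledge}, the buffer $P_{s_i}$ is only ever enlarged, and during iteration $t$ it is modified exclusively by unioning in the buffers $P_{s_j}$ that were broadcast by nodes $s_j \in \mathcal{T}(t-1)$ and successfully received by $s_i$; such an $s_j$ is necessarily within transmission range of $s_i$, i.e.\ $s_j \in S_i$, and the packet it broadcast reflects $s_j$'s state at the \emph{end of iteration} $t-1$. Hence
\[
\mathcal{W}_i(t) \;\subseteq\; \mathcal{W}_i(t-1) \;\cup\; \bigcup_{s_j \in S_i} \mathcal{W}_j(t-1) \;\subseteq\; \mathcal{B}_{t-1}(s_i) \;\cup\; \bigcup_{s_j \in S_i} \mathcal{B}_{t-1}(s_j),
\]
the last inclusion being the induction hypothesis applied to $s_i$ and to each of its neighbours. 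Second, I would record the distributive identity $\mathcal{B}_t(s_i) = \bigcup_{s_j \in S_i} \mathcal{B}_{t-1}(s_j)$, proved by a short secondary induction on $t$ from the recursive definition of $\mathcal{B}_t$ (it is just an interchange of nested unions). Because $d(s_j,s_j)=0 \le R$ we have $s_j \in S_j$, so $S_i$ is a closed neighbourhood; taking $s_j = s_i$ in the distributive identity gives $\mathcal{B}_{t-1}(s_i) \subseteq \mathcal{B}_t(s_i)$. Combining, the right-hand side of the display above equals $\mathcal{B}_t(s_i)$, which closes the induction.

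The main obstacle I expect is bookkeeping rather than mathematical depth: one must be careful that a packet received during iteration $t$ carries the sender's knowledge as of iteration $t-1$ (not $t$), so that the induction hypothesis is invoked at the right index, and one must verify the distributive identity and the resulting monotonicity of $\mathcal{B}_\cdot$ cleanly — this is exactly where the closed-neighbourhood convention matters, since without $s_j \in S_j$ the sets $\mathcal{B}_t(s_i)$ need not be nested and the bound would have to be phrased cumulatively as $\bigl|\bigcup_{s=0}^{t}\mathcal{B}_s(s_i)\bigr|$. I would also remark that, since $\mathcal{T}(0)=\varnothing$ in \algref{alg:initial}, no identifiers actually propagate during the first iteration, so the inequality is not tight for small $t$; this weakening is harmless for the statement as an upper bound and for its use in \lref{l2} and the proof of \lref{l5}.
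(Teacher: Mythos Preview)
Your argument is correct and is essentially a rigorous formalization of the paper's own proof, which is a two-sentence informal remark that information travels at most one hop per round and hence, by symmetry, the IDs known to $s_i$ after $t$ rounds lie in $\mathcal{B}_t(s_i)$. Your induction on $t$ together with the distributive identity $\mathcal{B}_t(s_i)=\bigcup_{s_j\in S_i}\mathcal{B}_{t-1}(s_j)$ makes this precise, and your observations about the closed-neighbourhood convention and the $\mathcal{T}(0)=\varnothing$ off-by-one are accurate refinements that the paper does not spell out.
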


\begin{proof}
The proof is a direct consequence of the data dissemination \algref{alg:knowledge}. At each round, each node can transmit to all of its neighbours. Hence after $t$ round, the information initialled at a node $s_i$ would have travelled at most inside $\mathcal{B}_t(s_i)$. Due to the symmetry of the problem, the farther information node $s_i$ can get is initialled inside $\mathcal{B}_t(s_i)$ which conclude that the maximum number of neighbours node $s_i$ can know at time instant $t$ is $|\mathcal{B}_t(s_i)|$.
\end{proof}

\subsection{Average Number of $t$-degree Neighbours}

\begin{lemma}
The average number of nodes in $\mathcal{B}_t(s_i) \setminus \mathcal{B}_{t-1}(s_i), t \geq 1$ can be approximated by:
\begin{align}
|\mathcal{B}_t(s_i) \setminus \mathcal{B}_{t-1}(s_i)| = \cfrac{n \pi R^2(2t-1)}{LW} .
\end{align}
\label{l2}
\end{lemma}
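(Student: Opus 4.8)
The plan is to approximate the $t$-hop neighbourhood $\mathcal{B}_t(s_i)$ by a disk of radius $tR$ centred at $s_i$, and then to count the expected number of uniformly distributed nodes falling inside the resulting annuli. This is an approximation in the spirit of the statement, valid in the dense, connected regime and away from the boundary of $A$.

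First I would establish the geometric picture. Since two nodes are connected exactly when their distance is at most $R$, one transmission propagates information by at most a distance $R$; hence, by an easy induction on $t$ (the base case $\mathcal{B}_0(s_i)=\{s_i\}$ being trivial) built on \lref{l1} and \algref{alg:knowledge}, every node of $\mathcal{B}_t(s_i)$ lies within distance $tR$ of $s_i$, so $\mathcal{B}_t(s_i) \subseteq D(s_i,tR)$, where $D(c,r)$ denotes the disk of radius $r$ about $c$. Conversely, assuming the network is dense and connected so that there are no coverage ``holes'', and ignoring boundary effects (i.e. taking $D(s_i,tR)\subseteq A$), a shortest-path argument shows that essentially every node inside $D(s_i,tR)$ is reached within $t$ hops: along the straight segment from $s_i$ to such a node one finds a chain of intermediate nodes at mutual distance less than $R$. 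Thus $\mathcal{B}_t(s_i)\approx D(s_i,tR)$, and consequently $\mathcal{B}_t(s_i)\setminus\mathcal{B}_{t-1}(s_i)$ is approximately the annulus $D(s_i,tR)\setminus D(s_i,(t-1)R)$.

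Next I would count. The $n$ nodes are placed independently and uniformly in $A$, which has area $LW$; hence for any region $\mathcal{R}\subseteq A$ of area $|\mathcal{R}|$ the probability that a given node lies in $\mathcal{R}$ equals $|\mathcal{R}|/(LW)$, and by linearity of expectation the expected number of nodes in $\mathcal{R}$ is $n|\mathcal{R}|/(LW)$ (the same device already used for $\mathds{P}(d(s_i,s_j)<R)=\pi R^2/(LW)$ in the proof of \lref{l4}). Applying this with $\mathcal{R}$ the annulus of inner radius $(t-1)R$ and outer radius $tR$, whose area is $\pi(tR)^2-\pi((t-1)R)^2=\pi R^2\big(t^2-(t-1)^2\big)=\pi R^2(2t-1)$, gives
\begin{align}
|\mathcal{B}_t(s_i)\setminus\mathcal{B}_{t-1}(s_i)| \approx \frac{n\pi R^2(2t-1)}{LW},
\end{align}
which is the claimed estimate; as a sanity check, summing over $1\le k\le t$ telescopes to $n\pi(tR)^2/(LW)$, the expected node count of the full disk $D(s_i,tR)$.

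The main obstacle is the disk approximation itself: rigorously, $\mathcal{B}_t(s_i)$ is a union of radius-$R$ disks centred at the hop-$(t-1)$ nodes, which can be a strict subset of $D(s_i,tR)$ when the random placement leaves gaps, and boundary effects shrink the annulus when $s_i$ lies within $tR$ of $\partial A$. Controlling these errors would require a connectivity/percolation argument — that with high probability the random geometric graph has no gap of diameter $R$ in the relevant region — together with the connectedness hypothesis already invoked in the statement; both are standard in the dense regime, and are precisely what reduce the result to an approximation rather than an exact identity.
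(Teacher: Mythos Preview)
Your proposal is correct and follows essentially the same route as the paper: approximate $\mathcal{B}_t(s_i)$ by the disk $D(s_i,tR)$ via an induction/triangle-inequality argument, then use the uniform node distribution to count expected nodes in the resulting region. The only cosmetic difference is that the paper first establishes $|\mathcal{B}_t(s_i)|\approx n\pi(tR)^2/(LW)$ and then subtracts consecutive values, whereas you compute the annulus area directly; your discussion of the reverse inclusion and the boundary/percolation caveats is in fact more careful than the paper's.
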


\begin{proof}
To proof this lemma, we first show that $|\mathcal{B}_t(s_i)| = \cfrac{n\pi R^2t}{LW}$. Using the fact that $\mathcal{B}_k(s_i) \subseteq \mathcal{B}_t(s_i),\ \forall \ k \leq t$, we can write $|\mathcal{B}_t(s_i) \setminus \mathcal{B}_{t-1}(s_i)| = |\mathcal{B}_t(s_i) |-| \mathcal{B}_{t-1}(s_i)|$ which conclude the proof. 

We proof that $|\mathcal{B}_t(s_i)| = \cfrac{n \pi (tR)^2}{LW}$ by induction. For $t=1$, we can clearly see from the definition of $\mathcal{B}_1(s_i)$ that $s_j \in \mathcal{B}_1(s_i)$ if and only if $d(s_i,s_j) \leq R$. Since the nodes are uniformly distributed over $[0,L][0,W]$ and neglecting the side effects, the average number of nodes is $\cfrac{n\pi R^2}{LW}$. Assume the preposition hold for $t$ and that $s_j \in \mathcal{B}_t(s_i)$ if and only if $d(s_i,s_j) \leq tR$. Assume $\exists s_j \in \mathcal{B}_{t+1}(s_i)$ such that $d(s_j,s_i)>(t+1)R$. From the triangular inequalities of the distance operator, we can write for all node $s_k \in \mathcal{B}_t(s_i)$:
\begin{align}
d(s_i,s_j) \leq d(s_i,s_k) + d(s_k,s_j)  .
\end{align}
From the assumption at step $t$, we have $d(s_i,s_j) \leq tR$. Therefore, we obtain:
\begin{align}
d(s_k,s_j) \geq d(s_i,s_j) - d(s_i,s_k) > (t+1)R - tR = R.
\end{align}

Since for nodes $s$ and $s^{\prime}$ to be connected, we should have $d(s,s^\prime) \leq R$, then node $s_j$ is not connected to any node $s_k \in \mathcal{B}_t(s_i)$. Therefore, $d(s_j,s_i) \leq (t+1)R$. This last expression translates to the fact that the average number of nodes in $\mathcal{B}_{t+1}(s_i)$ is $\cfrac{n\pi (t+1)^2R^2}{LW}$. Finally, using the fact that $|\mathcal{B}_t(s_i) \setminus \mathcal{B}_{t-1}(s_i)| = |\mathcal{B}_t(s_i) |-| \mathcal{B}_{t-1}(s_i)|$, we conclude that 
\begin{align}
|\mathcal{B}_t(s_i) \setminus \mathcal{B}_{t-1}(s_i)| = \cfrac{n\pi R^2(2t-1)}{LW}.
\end{align}
\end{proof}

\bibliographystyle{IEEEtran}
\bibliography{citations}

\end{document}